\documentclass[%
 reprint,
 amsmath,amssymb,
 aps,
]{revtex4-2}
\usepackage[english]{babel}

\usepackage{amsthm} 
\newtheorem{theorem}{Theorem}[section]   
\newtheorem{corollary}{Corollary}[theorem]
\usepackage{qcircuit, braket}
\usepackage[ruled,vlined]{algorithm2e}

\usepackage{graphicx}
\usepackage{subcaption} 
\usepackage[margin=1in]{geometry}


\usepackage{tikz}

\begin{document}

\preprint{APS/123-QED}

\title{Quantum computing on encrypted data with arbitrary rotation gates}

\author{Mohit Joshi}
    \email{joshimohit@bhu.ac.in}
\author{Manoj Kumar Mishra}%
\author{S. Karthikeyan}%
\affiliation{%
Department of Computer Science, \\
Banaras Hindu University, Varanasi, India - 221005
}%


\begin{abstract}
An efficient technique of computing on encrypted data allows a client with limited capability to perform complex operations on a powerful remote server without leaking anything about the input or output. 
Quantum computing provides information-theoretic security to solve such a problem, and many such techniques have been proposed under the premises of half-blind quantum computation.
However, these approaches are based on some universal combination of non-parametric gates ($H$, $S$, $T$, $CX$, $CZ$, and $CCX$). Hybrid quantum-classical algorithms require parametric gates, which, when decomposed into non-parametric gates, inadvertently increase the depth of the circuit and hence the communication rounds. 
We propose an approach for recursive decryption of any parametric gate, $R_z(\theta)$, without prior decomposition; exactly when $\theta=\pm\pi/2^m$ for $m\in \mathbb{Z^{+}}$, and approximately with arbitrary precision $\epsilon$ for any given $\theta$.
We also show that a blind algorithm based on such a technique needs at most $O(\log_2^2(\pi/\epsilon))$ computation steps and communication rounds, while the techniques based on a non-parametric resource set require $O(\ln^{3.97}(1/\epsilon))$ steps. 
We use this approach to propose a half-blind quantum computation protocol to enable efficient computation on encrypted data in the NISQ era.

\end{abstract}

\keywords{Blind Quantum Computation, Half-Blind Quantum Computation, Circuit-Based Quantum Computation, Arbitrary Rotation Gates}
\maketitle

\section{Introduction}

The future quantum internet will be enabled in a distributed setting, much like the classical supercomputing framework. 
The client will most likely be a limited-resource device capable of communicating with a powerful remote quantum device capable of performing complex computation.
This is evident with companies like IBM, Google, Intel, Amazon, and Microsoft providing quantum-as-a-service platforms \cite{azuma2023quantum,fitzsimons_private_2017,schmidt_error-corrected_2024,kumar2019towards}.

This quantum internet will be vastly capable of providing security unachievable in the classical internet \cite{wehner_quantum_2018}.
Childs proposed one such technique that enables a client to securely delegate their complex operations to a remote quantum server \cite{childs2005secure}.
This type of security comes under the premises of blind quantum computation. It is broadly divided into two categories, namely, fully blind quantum computation (FBQC) and half-blind quantum computation (HBQC).

FBQC is performed over a universal resource set that assures the security of data and computation. 
Broadbent \textit{et al.} proposed the first universal resource set, called the brickwork state, based on measurement-based quantum computation (MBQC) \cite{broadbent2009universal}.
Due to their fault-tolerance and verifiability, several alternative resource sets have been proposed using MBQC over the years \cite{giovannetti2013efficient, perez-delgado_iterated_2015}.
However, these protocols require the server to implement highly entangled cluster states \cite{mantri2013optimal,zhang_hybrid_2019,ma_universal_2024,van2024hardware}.
This limits the protocol's applicability for near-future applications.
This has been shown in experimental demonstrations with four-qubit cluster states implemented on optical setups \cite{barz2012demonstration,greganti2016demonstration,huang2017experimental}.

HBQC, on the other hand, promises security to data only.
This premise enables quantum computation on encrypted data for near-future implementation of secure quantum protocols.
Tham \textit{et al.} recently used such a protocol to experimentally demonstrate the application of quantum fully-homomorphic encryption \cite{tham2020experimental,li2024experimental}.
Other applications of such protocols include secure multiparty computation, interactive proofs, and quantum one-time programs \cite{broadbent2013quantum, tham2020experimental,gustiani2021blind,das2022blind, li2024experimental,kapourniotis2025asymmetric}.

These protocols encrypt the data using classically assisted Pauli's $X$ and $Z$ gates.
The decryption is performed using some commutation property of the corresponding gates.
Fisher \textit{et al.} first demonstrated this protocol using an optical setup of polarized beam splitters, half-waveplates, and quarter-waveplates \cite{fisher_quantum_2014}.
Various such protocols have been proposed using some universal combination of non-parametric universal resource set $\{H,S,T,CX,CZ,CCX\}$ \cite{broadbent2015delegating,marshall_continuous-variable_2016,tan_universal_2017,sano2021blind}.

However, modern hybrid quantum-classical algorithms are based on parametric circuits like $R_z$ rotation gates \cite{
callison2022hybrid,
khait_variational_2023,
shingu2022variational,
li2024blind}.
This means the circuit needs to be decomposed into non-parametric sets to be implemented using existing HBQC protocols.
This inadvertently increases the size of the blind circuit, which affects the depth and communication rounds of the protocols.

Recently, it has been shown that phase-shifted microwave pulses can implement the arbitrary $R_z$ rotation gates \cite{mckay2017efficient, chen_compiling_2023}.
This can greatly reduce the size of the blind circuit using protocols based on direct $R_z$ gate decryption without the need for decomposition.
Moreover, extensive circuit-based quantum simulation platforms enable rapid prototyping of such protocols \cite{javadi2024quantum, bergholm2022pennylaneautomaticdifferentiationhybrid}.

In this study, we propose the first technique of decryption for an arbitrary rotation gate without the need for prior decomposition, hence reducing the server resources in terms of depth, which in turn substantially reduces the communication overhead required for such schemes.
Our main results are:
\begin{itemize}
    \item We introduce a method of recursive decryption of $R_z(\theta)$; exactly when $\theta=\pm\pi/2^m$ for $m \in \mathbb{Z^{+}}$ and approximately with arbitrary precision $\epsilon$ for any given $\theta$.
    \item We show that any blind approach based on this recursive decryption requires at most $O(\log_2^2(\pi/\epsilon))$ communication rounds, while approaches based on non-parametric gates using Solovey-Kitaev decomposition require $O(\ln^{3.97}(1/\epsilon))$ communication rounds. 
    \item Based on the presented decryption technique of $R_z(\theta)$, we propose a half-blind quantum computing protocol for efficient computing on encrypted data in the NISQ era.
\end{itemize}

The rest of the study is organized as follows: Sec. \ref{sec:prelim} introduces the preliminary of the subject matter. In Sec. \ref{sec:proposed}, we present the technique of recursive decryption of arbitrary $Z$ rotation gates (Sec. \ref{sec:dec_rz}), and the protocol of half-blind quantum computation along with its proof of universality, correctness, and blindness of data (Sec. \ref{sec:proposed_protocol}). In Sec. \ref{sec:example}, a simple implementation example has been presented, and at last, Sec. \ref{sec:conclusion} gives the concluding remarks and future implications of the presented results.

\section{Preliminary}\label{sec:prelim}
The protocols of half-blind quantum computation work by encrypting the data using Pauli's $X$ and $Z$ rotation gates \cite{childs2005secure}.  
The general form of the Child's blindness process can be represented as:
\begin{align}
2^n \mathbb{I} = \sum_{j_1, j_2, ..., j_{2n} \in\{ 0,1\}} \bigg(\bigotimes_{i=1}^n Z_i^{j_{2i}} X_i^{j_{2i-1}}\bigg) | \psi \rangle 
\notag \\
\langle \psi | \bigg(\bigotimes_{i=1}^{n} X_i^{j_{2i-1}} Z_i^{j_{2i}}\bigg) ,
\end{align}
where $j_k \in_r \{0,1\}$ is the $k^{th}$ key out of total $2n$ classical keys used in the protocol for a $n$-qubit system $|\psi\rangle$. $X$ and $Z$ are Pauli's rotation matrices, and $\mathbb{I}$ is the identity matrix of size $n$. This equation shows that nothing except the upper bound on the number of qubits is revealed about the data $\ket{\psi}$.

The client sends this encrypted state to the server, which implements the desired unitary to complete the computation. The client then decrypts the result he obtained from the server using appropriate corrections. 
The correction is a unitary $D$ that operates on the result such that encryption gets nullified as:
\begin{equation}
    U = D \cdot U Z^b X^a,
\end{equation}
where $a,b \in_r \{0,1\}$.

For universal computation, the client should be able to delegate gates from a set of universal resources and correct the result. 
This implies the client should be able to decrypt gates from the set $\{H,S,T,CX,CZ,CCX\}$. 
Client is assumed to have the capability to perform $X$ and $Z$ gates, which are essential for encryption.
The decryption of Clifford gates ($H,S,CX,CZ$) does not require any interaction, as given below \cite{childs2005secure}:
\begin{align}
    H_1(X^a_1 Z^b_1 |\psi\rangle_1) &= X^b_1 Z^a_1 (H_1 |\psi\rangle_1), \label{eq:prelim_h}\\
    P_1(X^a_1 Z^b_1 |\psi\rangle_1) &= X^a_1 Z^{a\oplus b}_1 (P_1 |\psi\rangle_1), \\
   CX_{12} (X^a_1 Z^b_1 X^c_2 Z^d_2 |\psi\rangle_{12}) &= (X^a_1 Z^{b\oplus d}_1) \notag \\
                                                        & \quad (X^{a \oplus c}_2 Z^d_2) (CX_{12} |\psi\rangle_{12}),\\
   CZ_{12} (X^a_1 Z^b_1  X^c_2 Z^d_2 |\psi\rangle_{12}) &= (X^a_1 Z^{b\oplus c}_1) \notag \\
                                                        & \quad (X^c_2 Z^{a \oplus d}_2)(CZ_{12} |\psi\rangle_{12}), \label{eq:prelim_cz}
\end{align}

The decryption of the non-Clifford gate $T$ gate needs an additional ancilla qubit and assistance of $S$ correction as shown below \cite{broadbent2015delegating}: 
\begin{align}
    T(X_1^a Z_1^b |\psi\rangle_1 S_2^y Z_2^d | +\rangle_2) &= S_2^{a \oplus y} X_2^{a \oplus m} \notag \\
                                                            & \quad Z_2^{a(m \oplus y \oplus 1) \oplus b \oplus d \oplus y} T|\psi\rangle_2,
    \label{eq:t_gate_dec}
\end{align}
Here, $m$ is the measurement result from the first qubit.
Also, the $CCX$ gate require the assistance of two $CX$ and one $CZ$ gate as given below \cite{tan_universal_2017}:
\begin{align}
    CCX_{123} (X^a_1 Z^b_1 X^c_2 Z^d_2  X^e_3 Z^f_3 |\psi\rangle_{123}) &= (CX^c_{13}  X^a_1 Z^b_1) \notag \\
                                                                        & \quad (CX^a_{23}  X^c_2 Z^d_2) \notag \\ 
                                                                        & \quad (CZ^f_{12} X^e_3 Z^f_3) \notag \\
                                                                        & \quad (CCX_{123}|\psi\rangle_{123}),
\end{align}

\section{Proposed Protocol}\label{sec:proposed}

We assume the client has the capability to perform gates from the set $\{X, Z, Swap, Measure\}$, and the server has the capability to perform $\{H, CZ, R_z\}$ for universal computation. We can contrast it with ancilla-driven approaches, which require the client to possess the capability of $R_z(\pi/m)$ rotation with $m \in \{0,...7\}$, and MBQC approaches, which require the server to possess the capability to process highly entangled graph states \cite{broadbent2009universal,zhang_hybrid_2019,ma_universal_2024}.

We also assume the client is capable of generating random classical bits with either some classical pseudo-random number generator or using quantum random number generators, which is a standard BQC assumption \cite{childs2005secure}. These client-generated random keys do not need to be directly transmitted outside their private space and only need to help regulate the circuit's quantum gate implementation.

We now present the protocol, by firstly proposing an efficient technique for recursive decryption of an arbitrary $R_z$ gate (Sec. \ref{sec:dec_rz}). We then propose a universal scheme of half-blind quantum computation (Sec. \ref{sec:proposed_protocol}) based on the presented recursive decryption scheme. 

\subsection{Decryption of arbitrary $R_z(\theta)$ gates}\label{sec:dec_rz}
In this subsection, we explain our protocol that enables a client capable of performing $X$ and $Z$ gates to securely delegate the desired rotation of the $R_z(\theta)$ with arbitrary precision $\epsilon$.

It is well established in BQC primitives that the decryption of the $T$ gate requires an additional implementation of the $S$ gate (as evident from Eq. (\ref{eq:t_gate_dec})), which can be generalized to any $R_z(\theta)$ gate requiring $R_z(2\theta)$ for decryption \cite{sano2021blind}. We formally prove this generalization (Theorem \ref{thrm1:rz_gate}) and use it to present a technique of recursive decryption for arbitrary $R_z(\theta)$ gates as: 
\begin{equation}
    R_z(\theta)Z^bX^a = R_z^a(2\theta)X^aZ^bR_z(\theta).
    \label{eq:rz_theta_dec}
\end{equation}

However, for the protocol to work, this recursion must stop. We observe that this recursion will stop exactly when $\theta=\pm\pi/2^m$ for $m \in \mathbb{Z^+}$ (Alogrithm \ref{algo1:dec_rz_integral}) and for $\theta\neq \pm\pi/2^m$, we can decrypt with arbitrary precision $\epsilon$ (Alogirithm \ref{algo2:dec_rz_non_integral}). We describe these scenarios in detail below:

\textbf{Exact Decryption:} If $\theta=\pm\pi/2^m$ for $m \in \mathbb{Z^{+}}$, then the recursive decryption stops exactly at $\theta=\pm\pi/2$, as the base condition of this recursion is an $S=R_z(\pi/2)$ gate, which can be decrypted using only the client available $X$ and $Z$ gates as: 
\begin{align}
    R_z(\pm\pi/2)X^aZ^b &= e^{\mp ia\pi/2}Z^a X^a Z^b R_z(\pm \pi/2).
\label{eq:rz_pi_2_2}
\end{align}

Futhermore, for higher powers of $\pi$, it can be noted that $R_z(m\pi)$, where $m \in \mathbb{Z}$, can be decrypted at client side only as:
\begin{equation}
    R_z(m\pi) = \begin{cases}
        I & \text{if }m \equiv 0 \text{ (mod 2)}, \\
        Z & \text{if }m \equiv 1 \text{ (mod 2)}. \\
    \end{cases}
    \label{eq:rz_mpi}
\end{equation}

Using Eq. (\ref{eq:rz_pi_2_2}) and Eq. (\ref{eq:rz_mpi}), we can decrypt $R_z(\pm\pi/2^m)$ exactly with recursive application of $R_z$ gates (Theorem \ref{thrm2:integral_rz}). However, this procedure inadvertently reveals the value of the encryption key $a$ as the next step in the decryption is dependent on the previous step. To prevent this leakage of information to the server, we use an ancilla qubit and $Swap$ gate to flip the content of the working qubit and ancilla qubit as soon as the \textit{run\_of\_one} (subsequent $a=1$ in recursion) stops, as sketched in Algorithm \ref{algo1:dec_rz_integral}. 
Fig. \ref{fig:dec_rz_integral} shows the circuit of the recursive decryption process $R_z(\pm\pi/2^m)$ with assumption that \textit{run\_of\_one} stops after $k^{th}$ step when $\theta=\pm\pi/2^{k-1}$.
\begin{algorithm}[h]
\caption{Decryption of $R_z(\pm\pi/2^m)$ where $m \in \mathbb{Z^{+}}$}
\label{algo1:dec_rz_integral}

\KwIn{$\ket{\psi}$ and $\theta=\pi/2^m$ where $m\in \mathbb{Z^{+}}$.}
\KwResult{Decrypted state $R_z(\theta)\ket{\psi}$.}
Client generates $2m$ encryption keys $a_i, b_i \in_r \{0,1\}$ for $i \in \{0,\dots,m-1\}$\ and assign \textit{run\_of\_one} $\gets0$\;
Client encrypts input $\ket{\psi}$ using $Z^{b_0}X^{a_0}$ and sends it to server\;
Server computes $R_z(\theta)\ket{\psi}$ and sends result to client\;
Client decrypts $\ket{\psi}$ using $X^{a_0}Z^{b_0}$ and updates $\theta \gets 2\theta$\;

\If{$a_0 = 1$}{
    Client updates $\theta \gets 2\theta$ and  \textit{run\_of\_one} $\gets$ 1\;
}

\For{$k \gets 1$ \KwTo $\log_2(\pi / 2\theta)$}{
    \If{run\_of\_one = 1 \textbf{and} $a_{k-1}=0$}{
        Client applies $Swap$ gate on $\ket{\psi}$ between working and ancillary qubit\;
        \textit{run\_of\_one} $\gets$ 2\;
    }
    Client encrypts $\ket{\psi}$ using $Z^{b_k}X^{a_k}$ and sends to server\;
    Server applies $R_z(\theta)\ket{\psi}$ and returns result\;
    Client decrypts $\ket{\psi}$ using $X^{a_k}Z^{b_k}$ and update $\theta \gets 2\theta$\;
}

\If{\textit{run\_of\_one} $=2$}{
    Client applies $Swap$ gate\;
}

\Return $\ket{\psi}$\;

\end{algorithm}

This recursion only requires $m$ rotation gates, which implies $m$ communication rounds between client and server (see proof Theorem \ref{thrm2:integral_rz}).  
Also note, the expected length for \textit{runs\_of\_one} (subsequent $a=1$ in recursion) is $1$. This implies that the asymptotic complexity of decryption is $O(1)$, which can be used for further optimization. However, this optimization is omitted here for the sake of simplicity.

\begin{figure}
    \centering
\[
\Qcircuit @C=1em @R=.7em {
      & \gate{X^{a_0}Z^{b_0}} & \gate{R_z(\pm\pi/2^m)} & \gate{Z^{b_0}X^{a_0}} 
     & \qw & \qw \gg \quad 
    \\
     &  \qw   & \qw       & \qw 
     & \qw & \qw \gg \quad
}
\]
\[
\Qcircuit @C=1em @R=.7em {
    & \gg & \qw  & 
    \gate{X^{a_1}Z^{b_1}} & \gate{R_z(\pm\pi/2^{m-1})} & \gate{Z^{a_1}X^{b_1}} 
    & \qw & \quad \cdots \quad 
    \\
    & \gg & \qw  & 
    \qw        & \qw                & \qw 
    & \qw & \quad \cdots \quad
}
\]
\[
\Qcircuit @C=1em @R=.7em {
    & \quad \cdots \quad & & 
    \qswap  & \gate{X^{a_k}Z^{b_k}} & \gate{R_z(\pm\pi/2^{k})} & \gate{Z^{a_k}X^{b_k}} & 
    \qw & \quad \cdots \quad & &
    \\
    & \quad \cdots \quad & &
    \qswap \qwx & \qw        & \qw                & \qw &
    \qw & \quad \cdots \quad & &
}
\]
\[
\Qcircuit @C=1em @R=.7em {
    & \quad \cdots \quad & &
    \gate{X^{a_{m-1}}Z^{b_{m-1}}} & \gate{R_z(\pm\pi/2)} & \gate{Z^{a_{m-1}}X^{b_{m-1}}} 
    & \qswap      & \qw 
    \\
    &
     \quad \cdots \quad & &
    \qw        & \qw                & \qw 
    & \qswap \qwx & \qw
}
\]
    \caption{Delegation of $R_z(\pm\pi/2^m)$, where $m \in \mathbb{Z^{+}}$ with recursive decryption of $R_z(\pm\pi/2^{m-1})$ untill $m=1$. Here, the $Swap$ gate is applied when \textit{run\_of\_one} (subsequent $a=1$ in recursion) is exhausted. Note, $\{R_z(\theta)\}$ is implemented by server, while $\{X,Z,Swap\}$ are client implementable gates.}
    \label{fig:dec_rz_integral}
\end{figure}
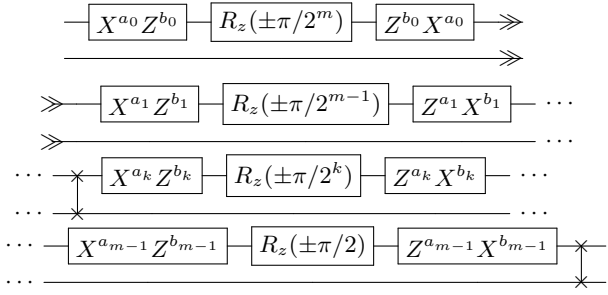

\textbf{Approximate Decryption:}
The exact decryption of $R_z(\pm \pi/2^m)$ for $m \in \mathbb{Z^{+}}$ can be used to approximate any arbitray $\theta$ using a simple observation that:
\begin{equation}
   \theta \approx  p\pi + \sum_{m=1}^M \frac{a_m\pi}{2^m},
   \label{eq:approx_theta}
\end{equation}
with appropriate value of $p \in \mathbb{Z}$ and $a_m \in \{-1,0,1\}$ for $m \in \{1,...,M \}$ where $M=\lceil \log_2(\pi/\epsilon)\rceil$.

Hence, $R_z(\theta)$ can be decrypted using:
\begin{align}
    R_z(\theta) &\approx R_z(p\pi) \prod_{m=1}^M R_z(a_m \pi/2^m).
     \label{eq:rz_decomposed_na}
\end{align}

The procedure of decryption of arbitary $R_z(\theta)$ is sketched in Algorithm \ref{algo2:dec_rz_non_integral}, where $R_z(p\pi)$ is implemented using a simple application of the $Z$ gate (Eq. (\ref{eq:rz_decomposed_na})), and $R_z(a_m\pi/2^m)$ is recursively decrypted using Algorithm \ref{algo1:dec_rz_integral} as a subroutine for all values of $m \in \{1,...,M\}$.

\begin{algorithm}[h]
\caption{Decryption of arbitrary $R_z(\theta)$}
\label{algo2:dec_rz_non_integral}

\KwIn{$\ket{\psi}$ and $\theta$.}
\KwResult{Decrypted state $R_z(\theta)\ket{\psi}$.}

Find coefficients $p \in \mathbb{Z}$ and $a_m \in \{-1,0,1\}$  $\forall m \in \{1,\dots,M\}$,  
where $M \gets \lceil \log_2(\pi / \epsilon) \rceil$,  
such that 
\(
\theta = p\pi + \sum_{m=1}^M \frac{a_m \pi}{2^m};
\)

\If{$p \equiv 1 \pmod{2}$}{
    Client applies $Z$ to $\ket{\psi}$\;
}

\For{$m \gets 1$ \KwTo $\log_2(\pi/\epsilon)$}{
    \If{$a_m = 1$ \textbf{or} $a_m = -1$}{
        $\ket{\psi} \gets$ Call Algorithm~\ref{algo1:dec_rz_integral} with $\ket{\psi}$ and $\theta = a_m\pi/2^m$\;
    }
}

\Return $\ket{\psi}$\;

\end{algorithm}

This recursive decryption requires atmost $O(\log_2^2(\pi/\epsilon))$ rotation gates, which implies atmost $\log_ 2^2(\pi/\epsilon)$ communication rounds (see proof Theorem \ref{thrm3:non_integral_rz}). 
Fig. \ref{fig:complexity_fig} shows the increase in depth between the blind approach based on a parametric $R_z$ gate is $O(\log_2^2(\pi/\epsilon))$, while non-parametric resource set-based blind decryption is $O(\ln^{3.97}(1/\epsilon))$ based on Solovay-Kitaev decomposition (see Corollary \ref{coro1:efficient_rz}). 
Also note, the asymptotic complexity ($O(log_2(\pi/\epsilon))$) makes the recursive decryption with optimization efficient even when modern decomposition algorithms like Ross-Sellinger algorithm (gridsynth) are used, which has asymptotic complexity of $3\log_2(1/\epsilon) + O(\log_2(\log_2(1/\epsilon)))$ \cite{ross2014optimal}. Moreover, if $\theta=\pm\pi/2^m$ for $m \in \mathbb{Z^+}$, then the recursive decryption can be performed in $O(1)$ steps asymptotically. For instance, a setup implementing $R_z(\pi/128)$ using gridsynth will require $104$ $T$ gates asymptotically and $255,856$ $T$ gates using Solovay-Kitaev decomposition for $\epsilon= 10^{-10}$. However, using recursive decryption without decomposition requires only one $R_z$ gate asymptotically, and at most $m = log_2(128) = 7$ gates.

\begin{figure}
    \centering
    \includegraphics[width=0.95\linewidth]{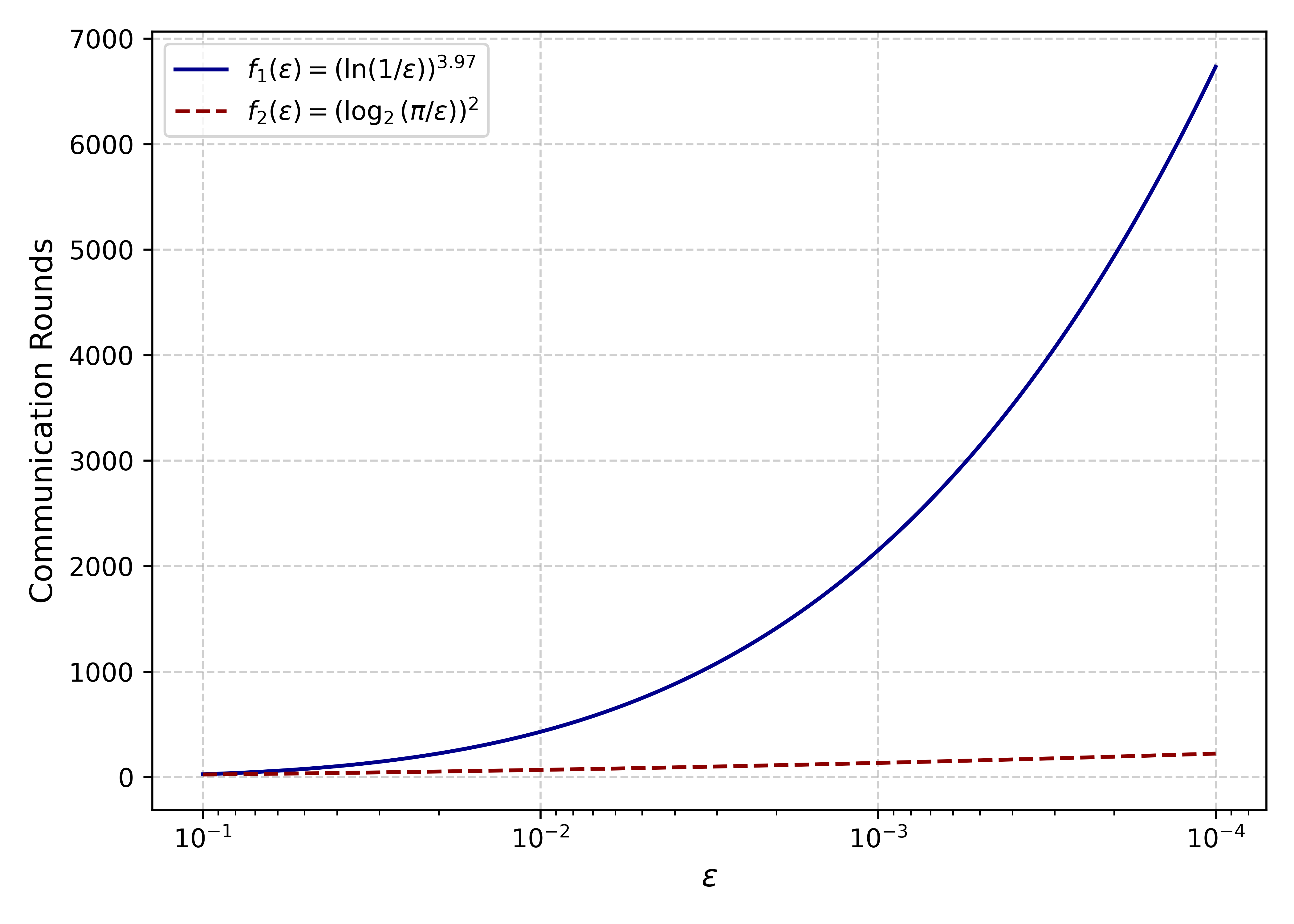}
    \caption{Increase in depth of proposed blind decryption based on arbitrary rotation gate $R_z(\theta)$ ($O(log_2^2(\pi/\epsilon))$) and non-parameteric gates $\{H,S,T\}$  ($O(\ln^{3.97}(1/\epsilon))$) using Solovey-Kitaev decomposition. }
    \label{fig:complexity_fig}
\end{figure}

\begin{algorithm*}
\caption{Proposed Half-Blind Quantum Computation Protocol}
\label{algo3:hdqc_algo}

\KwIn{Algorithm $\mathcal{A}$ as a collection of unitaries $U_q$ with depth $D$, and total number of qubits $n$, 
where $q$ is the ordered set of qubits on which $U$ acts.}

Initialize input $\ket{\psi_0}$\;  
Create $\mathcal{J} \gets \{U_{q,d} \mid d \in \{0, \dots, D'-1\}\}$ from the given algorithm $\mathcal{A}$ with $n' > n$ total number of qubits, including ancilla qubits, and $D'$ total depth of the circuit, including original and trap gates\;

Client sends $\mathcal{J}$ to server over a classical channel\;

\For{$j \gets 0$ \KwTo $D'-1$}{
    Client generates a random key set  
    $\mathcal{K}_j = \{ k_i \mid k_i \in_r \{0,1\},\ i \in \{0,1,\dots,2n'-1\} \}$\;

    Client encrypts previous input:  
    \(
    \ket{\psi}_{j-1,enc} \gets 
    \Big(\bigotimes_{q=0}^{n-1} Z_q^{\mathcal{K}_j[2q+1]} X_q^{\mathcal{K}_j[2q]} \Big) 
    \ket{\psi}_{j-1}
    \)\;

    Client sends $\ket{\psi}_{j-1,enc}$ to server via quantum channel\;

    Server computes  
    $\ket{\psi}_{j,enc} \gets \mathcal{J}_j \ket{\psi}_{j-1,enc}$\;

    Server sends $\ket{\psi}_{j,enc}$ to client\;

    Client decrypts  
    $\ket{\psi}_j \gets dec(\ket{\psi}_{j,enc}, \mathcal{J}_j)$  
    using Eq. (\ref{eq:H_dec}), Eq. (\ref{eq:CZ_dec}), and Algorithm~\ref{algo2:dec_rz_non_integral}\;
}

\Return $\ket{\psi}_{D'-1}$\;

\end{algorithm*}

\subsection{Universal Half-Blind Quantum Computation with arbitrary $R_z(\theta)$ gates}\label{sec:proposed_protocol}
\begin{figure*}
    \centering
    
    \begin{subfigure}[b]{0.45\textwidth}
        \centering
        \includegraphics[width=\linewidth]{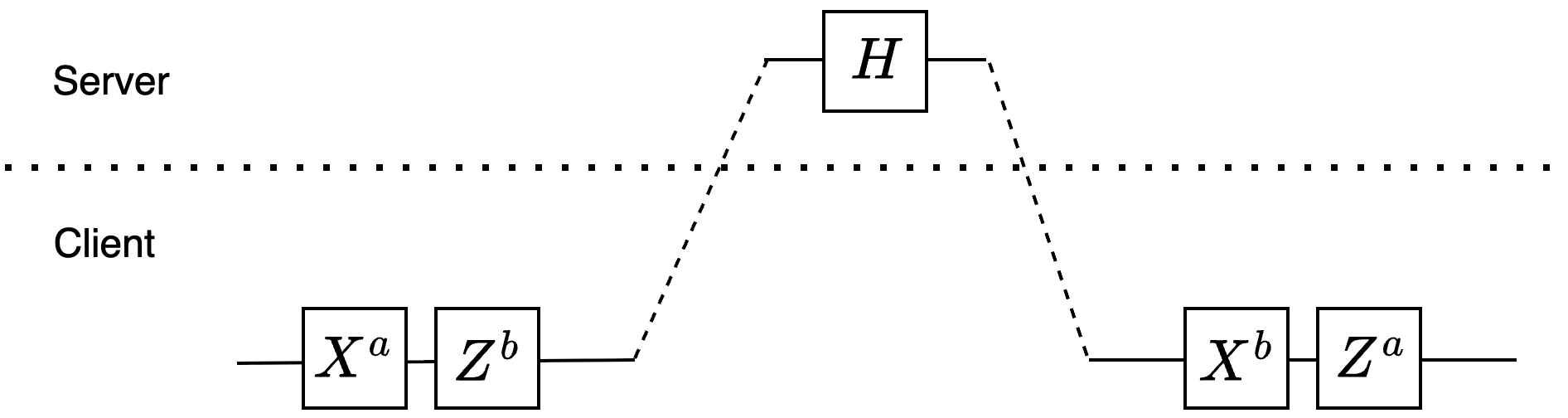}
        \label{fig:first}
    \vspace{0.8em}
    \end{subfigure}
    \begin{subfigure}[b]{0.45\textwidth}
        \centering
        \includegraphics[width=\linewidth]{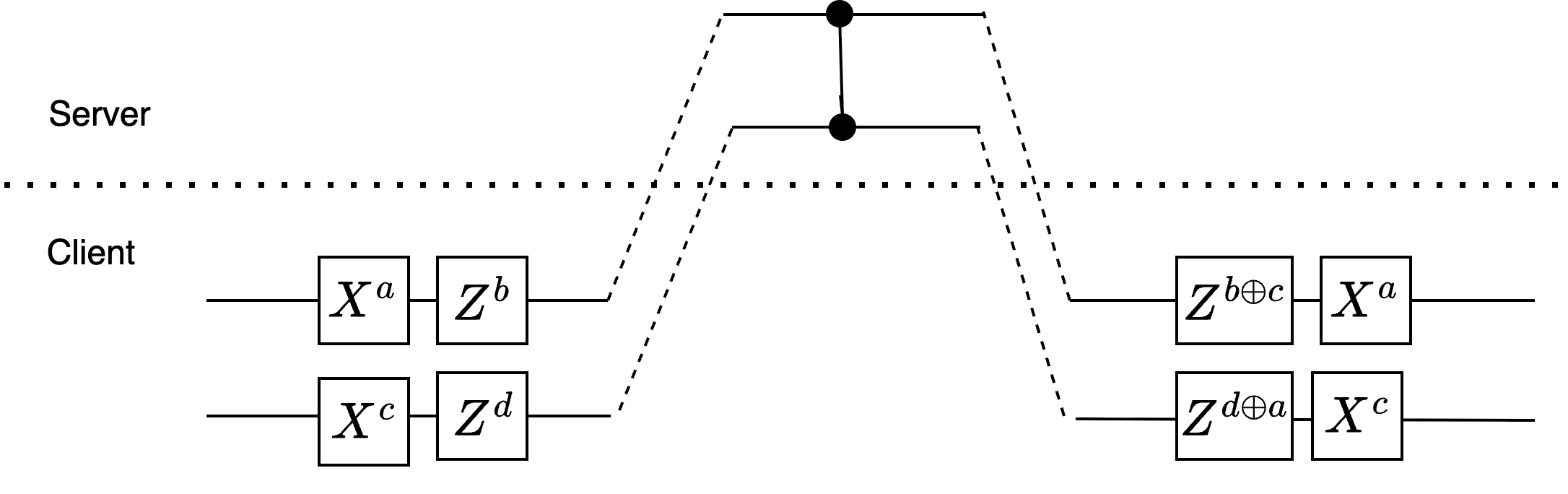}
        \label{fig:second}
    \end{subfigure}
    \begin{subfigure}{0.95\textwidth}
        \centering
        \includegraphics[width=\linewidth]{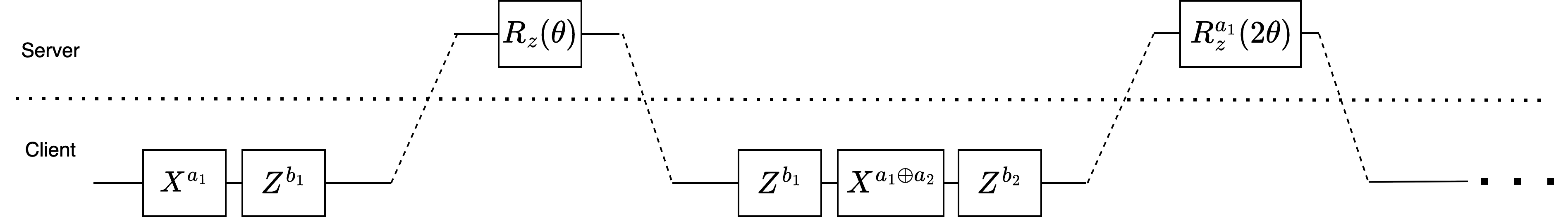}
        \label{fig:third}
    \end{subfigure}
    
    \caption{The process of secure delegation of resources $\{H,CZ,R_z\}$ which is encryption and decrypted by client capable of performing $\{X,Z,Swap,Measure\}$ assited by random classical bit $a_i,b_i \in_r \{0,1\}$.}
    \label{fig:enc_univ_gates}
\end{figure*}

In this section, we propose a half-blind quantum computation protocol based on the presented decryption of arbitrary rotation gates in Section \ref{sec:dec_rz}.

The protocol allows us to securely delegate an algorithm $\mathcal{A}$ given as a collection of unitaries $U_q$ with depth $D$, and total number of qubits $n$, where $q$ denotes the ordered set of qubits on which the unitary $U$ needs to be applied.

The client begins with an input state $|\psi\rangle_0$ defined on $n'>n$-dimensional Hilbert state with at least one ancillary qubit, 
Client uses $\mathcal{A}$ to define a computation set $\mathcal{J} = \{ U_{q,d} | \quad d \in \{0,..., D'-1\}\}$, where $D'$ is the depth of the given circuit with some trap operations. 
For each $j^{th}$ step of $D'$ computation step, which prepares a set of classical random keys $\mathcal{K}_j=\{k_i | k_i \in_r \{0,1\}, i =\{0,1,...,2n'-1\} \}$ of size $2n'$. The total encryption keys needed to complete the $D'$ steps are $(2n'D')$ drawn from the possible binary key space of size $2^{2n'D'}$. 

The protocol runs as an \textit{encrypt-compute-decrypt} cycle for every unitary set at $j^{th}$ depth in computation set $\mathcal{J}$, i.e., $D'$ times, using a quantum channel comprising of working and ancillary qubits, which is a subset of $\{0,...,n'-1\}$.

In the $j^{th}$ cycle of \textit{encrypt} phase, the client encrypts the previous state $|\psi\rangle_{j-1}$ by applying classical controlled $X$ and $Z$ gates on state  using the generated encryption key $\mathcal{K}_j$ as shown in Eq. (\ref{eq:encrypt}):
\begin{equation}
|\psi\rangle_{j-1,enc} = \bigg(\bigotimes_{q=0}^{n-1} X_q^{\mathcal{K}_j[2q]} Z_q^{\mathcal{K}_j[2q+1]}\bigg) |\psi\rangle_{j-1}.
\label{eq:encrypt}
\end{equation}
The client then transmits this encrypted state $\ket{\psi}_{j, enc}$ to the server.

In the $j^{th}$ cycle of \textit{compute} phase, the server applies the unitary $U_{q,j}$ on $|\psi\rangle_{j-1, enc}$ defined in the computation set $\mathcal{J}$ as shown in Eq. (\ref{eq:compute}):
\begin{equation}
    |\psi\rangle_{j,enc} = U_{q,j}|\psi\rangle_{j-1,enc},
    \label{eq:compute}
\end{equation}
where $q$ denotes the qubits on which the computation $U_j$ has to be performed. 
The resultant state $U_j|\psi\rangle_{j, enc}$ is returned to the client. 

In the $j^{th}$ cycle of \textit{decrypt} phase, the client recovers the correct state by decrypting the server output using Eq. (\ref{eq:decrypt}):
\begin{equation}
    U_j|\psi\rangle_j = dec(U_j|\psi\rangle_{j,enc}),
    \label{eq:decrypt}
\end{equation}
where decryption map $dec(\cdot)$ is defined using the keys in encryption set $\mathcal{K}_j$ and the relations given in Eq. (\ref{eq:H_dec}) to Eq. (\ref{eq:Rz_dec}) using Algorithm \ref{algo2:dec_rz_non_integral}.
Fig. \ref{fig:enc_univ_gates} shows the encryption and decryption process of the universal set $\{H,CZ,R_z\}$.
\begin{align}
    H_1 Z_1^b X_1^a  &= X_1^b Z_1^a H \label{eq:H_dec} \\
    CZ_{1,2} Z_2^d X_2^c Z_1^b X_1^a &= Z_2^{d \oplus a } X_2^c Z_1^{b \oplus c} X_1^a CZ_{1,2} \label{eq:CZ_dec}\\
    R_z(\theta)_1 Z_1^b X_1^a &= R_z^a(2\theta)_1 Z_1^b X_1^a R_z(\theta)_1 \label{eq:Rz_dec}
\end{align}

\textbf{Proof of Universality}: 
As a set of Clifford and non-Clifford gates like $\{X,Z,H,S,T,CX\}$ form a universal set for quantum computation \cite{nielsen2001quantum}. It is trivial to show that a server capable of performing only $\{H,CZ,R_z\}$ will be able to let a client with $X$ and $Z$ gates perform universal computation. 

Using Euler's Z-X-Z decomposition, we can represent any single qubit gate as:
\begin{equation}
    U = e^{i\phi} R_z(\alpha)R_x(\beta)R_z(\gamma)
\end{equation}
Also $R_x(\theta) = H R_z(\theta)H $. Hence, any single-qubit gate can be performed using $H, R_z(\theta)$.

Also, $CX_{1,2} = H_2CZ_{1,2}H_2$, which can be used to reach any multi-qubit operation along with non-Clifford resource $T(=R_z(\pi/4))$ gate.

\textbf{Proof of Correctness}:
Correctness of $H$ and $CZ$ directly follows from literature \cite{broadbent2015delegating, tan_universal_2017}, and can be verified for all possible $\ket{\psi}$ and $a,b,c,d \in_r \{0,1\}$ with the Eq. (\ref{eq:H_dec}) and Eq. (\ref{eq:CZ_dec}).
Theorem \ref{thrm1:rz_gate}, Theorem \ref{thrm2:integral_rz}, and Theorem \ref{thrm3:non_integral_rz} proves the correctness of recursive decryption of $R_z(\theta)$ as given in Eq. (\ref{eq:child_rz}).
\begin{align}
       R_z(\theta)\ket{\psi}\bra{\psi}R_z(\theta) &= R^a_z(2\theta) Z^b X^a R_z(\theta)\ket{\psi}
       \notag \\& \quad 
        \bra{\psi} X^a Z^b R_z(\theta) \label{eq:child_rz}
\end{align}

\textbf{Proof of Blindness of Data}:
For the blindness of data, we need to show that the encryption keys do not leak to the server. 
The blindness of Clifford's resources is straightforward, as it does not require any additional interaction between client and server. This can be proven using Child's equation:
\begin{align}
       2\mathbb{I} &=   \sum_{a,b \in \{0,1\}} Z^bX^a H\ket{\psi} \bra{\psi} X^aZ^b H \\     
       4\mathbb{I} &= \sum_{a,b,c,d \in \{0,1\}}Z_2^{d} X_2^c Z_1^{b} X_1^a 
       CZ_{1,2}\ket{\psi} 
       \notag \\& \quad \quad \quad \quad \quad \quad 
       \bra{\psi}  X_1^a Z_1^b X_2^c Z_2^d CZ_{1,2} \\
\end{align}

The proof for the blindness of $R_z(\theta)$ is based on showing that the proposed protocol of recursive decryption, as presented in Section \ref{sec:dec_rz}, can be converted to an equivalent delayed version, a technique of proof similar to Ref. \cite{broadbent2015delegating}. 

We start by noting that the first delegation of the $R_z(\theta)$ gate decryption can be represented as shown in Fig. \ref{fig:proof_fig1}. This can be represented by an equivalent entanglement-based circuit as represented in Fig. \ref{fig:proof_fig2}. 
The delayed measurement of qubits shows that the first delegation of $R_z(\theta)$ can be implemented without revealing the encryption keys $a,b$. 

However, for exact decryption, successive delegation of $R_z(2\theta)$ reveals the previous encryption key (Eq. \ref{eq:rz_pi_2}). Application of the $Swap$ gate and ancillary qubit at the client side makes the recursion steps independent of encryption keys by swapping the content of ancilla and working qubit, as soon as \textit{run\_of\_one} stops. At the server side, this will resemble as if the encryption key comprises all $1$s, without revealing any additional information.
This also ensured the privacy for any arbitrary $R_z$ gate as the approximate decomposition of $\theta$ as given in Eq. \ref{eq:approx_theta} is independent of encryption keys.

This procedure prevents the leakage of the encryption key, hence preserving the blindness of data $\ket{\psi}$.

\begin{figure}
    \centering
    \includegraphics[width=0.95\linewidth]{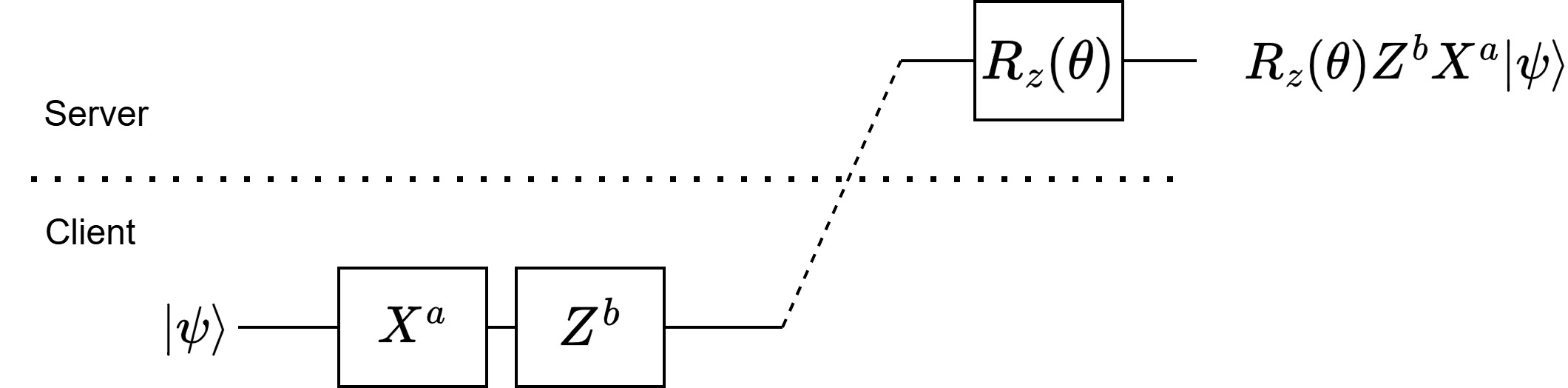}
    \caption{The delegation of $R_z(\theta)$ gate in the proposed protocols.}
    \label{fig:proof_fig1}
\end{figure}

\begin{figure}
    \centering
    \includegraphics[width=0.95\linewidth]{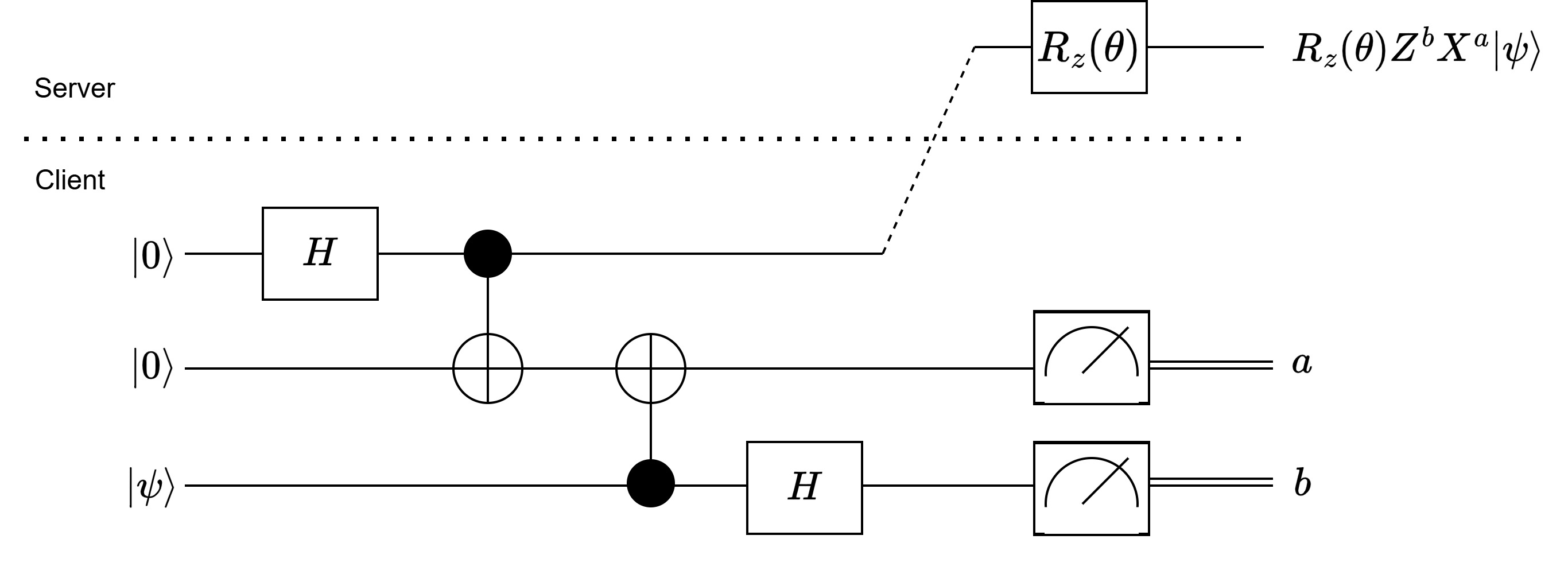}
    \caption{The entanglement-based equivalent of proposed decryption of $R_z(\theta)$ with delayed correction.}
    \label{fig:proof_fig2}
\end{figure}

\section{Example}\label{sec:example}
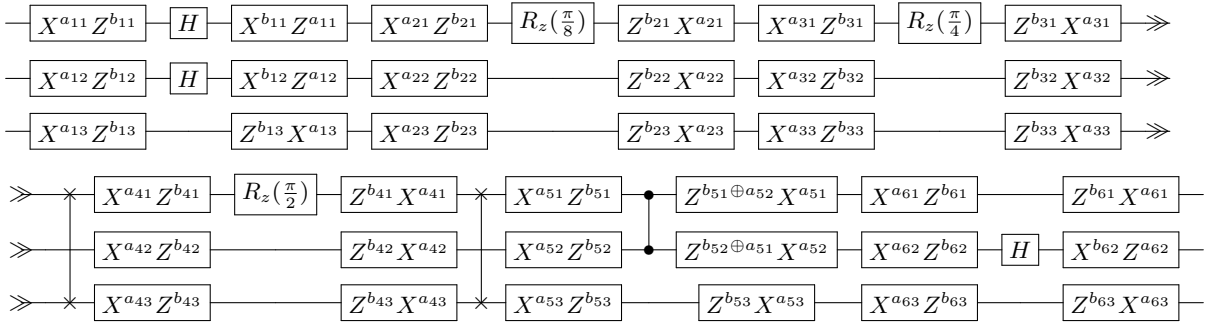
\begin{figure*}[t]
    \centering
\[
\Qcircuit @C=1em @R=.7em {
    & 
    \gate{X^{a_{11}} Z^{b_{11}}} & \gate{H} & \gate{X^{b_{11}} Z^{a_{11}}} & 
    \gate{X^{a_{21}} Z^{b_{21}}} & \gate{R_z(\tfrac{\pi}{8})} & \gate{Z^{b_{21}} X^{a_{21}}} & \gate{X^{a_{31}} Z^{b_{31}}} & 
    \gate{R_z(\tfrac{\pi}{4})} & \gate{Z^{b_{31}} X^{a_{31}}} & 
    \qw & \qw \gg \quad 
    &   
    \\
    & \gate{X^{a_{12}} Z^{b_{12}}} & \gate{H} & \gate{X^{b_{12}} Z^{a_{12}}} & \gate{X^{a_{22}} Z^{b_{22}}} & \qw                        & \gate{Z^{b_{22}} X^{a_{22}}} & \gate{X^{a_{32}} Z^{b_{32}}} & \qw                        & \gate{Z^{b_{32}} X^{a_{32}}} &
    \qw & \qw \gg \quad 
    &   
    \\
    & \gate{X^{a_{13}} Z^{b_{13}}} & \qw      & \gate{Z^{b_{13}} X^{a_{13}}} & \gate{X^{a_{23}} Z^{b_{23}}} & \qw                        & \gate{Z^{b_{23}} X^{a_{23}}} & \gate{X^{a_{33}} Z^{b_{33}}} & 
    \qw                        & \gate{Z^{b_{33}} X^{a_{33}}} & \qw & \qw \gg \quad 
    &
}
\]
\[
\Qcircuit @C=1em @R=.7em {
    & \gg & \qw  & 
     \qswap & \gate{X^{a_{41}} Z^{b_{41}}} & \gate{R_z(\tfrac{\pi}{2})}  &  \gate{Z^{b_{41}} X^{a_{41}}} & \qswap & \gate{X^{a_{51}} Z^{b_{51}}} & \ctrl{1} & \gate{Z^{b_{51} \oplus a_{52}} X^{a_{51}}} & \gate{X^{a_{61}} Z^{b_{61}}} & \qw      & \gate{Z^{b_{61}} X^{a_{61}}} & \qw
    \\
    & \gg & \qw  &  
      \qw & \gate{X^{a_{42}} Z^{b_{42}}}   & \qw                  &  \gate{Z^{b_{42}} X^{a_{42}}}    & \qw    & \gate{X^{a_{52}} Z^{b_{52}}} & \control \qw & \gate{Z^{b_{52} \oplus a_{51}} X^{a_{52}}} & \gate{X^{a_{62}} Z^{b_{62}}} & \gate{H} & \gate{X^{b_{62}} Z^{a_{62}}} & \qw
    \\
    & \gg & \qw  & 
    \qswap \qwx[-2] & \gate{X^{a_{43}} Z^{b_{43}}} & \qw           & \gate{Z^{b_{43}} X^{a_{43}}}             & \qswap \qwx[-2] & \gate{X^{a_{53}} Z^{b_{53}}} & \qw      & \gate{Z^{b_{53}} X^{a_{53}}} & \gate{X^{a_{63}} Z^{b_{63}}} & \qw      & \gate{Z^{b_{63}} X^{a_{63}} } & \qw
}
\]
\caption{Complete blind implementing of circuit given by computation set $\mathcal{J}_1$ in Eq. (\ref{eq:ex_J1}), where $\{X,Z,Swap\}$ gate set is performed by client while $\{H,CZ, R_z\}$ gate set is delegated to remote server. }
    \label{fig:complete_circ}
\end{figure*}

In this section, we show the secure implementation of an example circuit containing $H$, $R_z(\pi/8)$, and $CX$ gates using Algorithm \ref{algo3:hdqc_algo}.

Suppose we are given a two-qubit circuit $\mathcal{C}$ as shown below:
\[
\Qcircuit @C=1em @R=.7em {
    & \gate{H} & \gate{R_z(\tfrac{\pi}{8})} & \ctrl{1} & \qw \\
    & \qw      & \qw                        & \targ    & \qw
}
\]

The expected output for given input $\ket{\psi}_{1,2}=\alpha_1\ket{00}_{1,2} + \alpha_2\ket{10}_{1,2} + \alpha_3\ket{01}_{1,2}+ \alpha_4\ket{11}_{1,2}$ of the circuit $\mathcal{C}$ is:
\begin{align}
    CX_{1,2}R_z(\pi/8)_1H_1\ket{\psi}_{1,2} &= (\alpha_1 + \alpha_3)\ket{00}_{1,2} \notag\\
    & \quad + (\alpha_2 + \alpha_4)\ket{10}_{1,2} \notag\\
    & \quad + (\alpha_2-\alpha_4)e^{i\pi/8}\ket{01}_{1,2} \notag\\
    & \quad + (\alpha_1 -\alpha_3)e^{i\pi/8}\ket{11}_{1,2}
\end{align}

The circuit is first transpiled to given universal resource set $\{H,CZ,R_z\}$ as:

\[
\Qcircuit @C=1em @R=.7em {
    & \gate{H} & \gate{R_z(\tfrac{\pi}{8})} & \ctrl{1} & \qw \\
    & \gate{H} & \qw                        & \control \qw & \gate{H} & \qw
}
\]

Now, we start by creating the computation set $\mathcal{J}_1$ from this circuit which will be:
\begin{align}
    \mathcal{J}_1 & = \{  H_{(0,),0}, H_{(1,),0}, R_z(\pi/8)_{(0,),1}, R_z(\pi/4)_{(0,),2}, \notag\\ 
    & \quad R_z(\pi/2)_{(0,),3}, CZ_{(0,1),4}, H_{(1,),5} \}.
    \label{eq:ex_J1}
\end{align}
Here, $D'=6$, $n'=3$, and the secure implementation requires the key of size $2n'D'= 36$. 

In the Fig. \ref{fig:complete_circ}, we show the complete circuit of client and server side to perform computation on the encrypted state $\ket{\psi}$. The gate set $\{X,Z,Swap\}$ is performed by the client while the gate set $\{H,CZ,R_z\}$ is server implemented.

The protocol starts with the client encrypting the input state $\ket{\psi}$ using $Z_1^{b_{11}}X_1^{a_{11}}$ on first qubit, $Z_2^{b_{12}}X_2^{a_{12}}$, and $Z_3^{b_{13}}X_3^{a_{13}}$ on third qubit. The client then sent these qubits to the server, who then applies $H_1H_2$, which will be decrypted by the client with the appropriate key as given in Eq. (\ref{eq:H_dec}).

Client then again encrypts the resultant state using next round keys $a_{21}a_{22}, a_{23}$ for $X$ gate and $b_{21}, b_{22}, b_{23}$ for $Z$ gate. The $R_z(\pi/8)$ needs two additional recursive call to $R_z(\pi/4)$ and $R_z(\pi/2)$ for decryption. In the chosen example, we have assumed that \textit{run\_of\_one} stops at $a_{31}$ i.e. $a_{11}=1, a_{21}=1,a_{31}=0$. Hence, the $Swap$ gate is used after $R_z(\pi/4)$ rotation. Another $Swap$ gate is applied after the recursive decryption of $R_z(\pi/8)$ has been completed at the client's side to restore the state of the working qubit. After this, $CZ_{1,2}$ and $H_2$ gates are decrypted using Eq. (\ref{eq:CZ_dec}) and Eq. (\ref{eq:H_dec}) to complete the given blind circuit.

\section{Conclusion}\label{sec:conclusion}
In this paper, we present a protocol for recursive decryption of $R_z$ gates with arbitrary rotations. We showed that this recursion requires at most $O(\log_2^2(\pi/\epsilon))$ server resources and communication rounds.
This is in contrast to protocols based on the decryption of non-parametric resources $\{H,S,T,CX,CZ,CCX\}$, which will require $O(\ln^{3.97}(1/\epsilon))$ steps if decomposition is performed using the Solovay-Kitaev theorem. 
Using the recursive decryption of arbitrary rotation gates, we have presented an efficient technique to perform half-blind quantum computation. 
This enables computing on encrypted data with the least amount of entangled resources and communication rounds.
This implies that larger proof-of-principle experiments can be implemented using phase-shifted microwave pulses \cite{mckay2017efficient, chen_compiling_2023}, enabling secure implementation of hybrid quantum-classical algorithms.

Our study considers only a blind approach, while an efficient full blind approach, where $\theta$ is blind, will enhance the practicality of the solution. Moreover, in our approach client still needs the ability to prepare the state $\ket{\psi}$, which can be improved using the assumption of non-communication server \cite{morimae_secure_2013} and entanglement-swapping-based triple server protocols \cite{li_triple-server_2014}. Moreover, verification of the server's honesty, that is, the server performed the operations correctly, is an important question to answer. Although a simple trap-based mechanism can be implemented for verification of the server in our protocol, still more elegant and cost-effective solution can be explored \cite{morimae_measurement-only_2016,fitzsimons_unconditionally_2017}.

\begin{acknowledgments}
This research is supported by a seed grant under the IoE, BHU [grant no. R/Dev/D/IoE/SEED GRANT/2020-21/Scheme No. 6031].

\end{acknowledgments}

\appendix

\section{Decryption of arbitrary $Z$ gate}

\begin{theorem}\label{thrm1:rz_gate}
The decryption of arbitrary $R_z(\theta)$ is dependent on $R_z(2\theta)$, i.e.,
$R_z(\theta)Z^bX^a = R^a_z(2\theta)X^aZ^bR_z(\theta)$.
\end{theorem}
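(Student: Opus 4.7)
The plan is to verify the identity directly using two basic commutation facts for $R_z(\theta)=e^{-i\theta Z/2}$: first, $R_z(\theta)$ is diagonal in the computational basis, so $Z R_z(\theta) = R_z(\theta) Z$; second, $X R_z(\theta) X = e^{-i\theta (XZX)/2} = e^{i\theta Z/2} = R_z(-\theta)$, which rearranges to the pull-through rule $R_z(\theta) X = X R_z(-\theta)$. These two relations are the only non-trivial ingredients, and the rest of the argument amounts to careful bookkeeping of powers of $a$ and $b$.

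First, I would peel off the $Z^b$ on the left using the diagonal commutation: $R_z(\theta) Z^b X^a = Z^b R_z(\theta) X^a$. Next, I would iterate the pull-through rule to obtain $R_z(\theta) X^a = X^a R_z((-1)^a \theta)$, and then factor $R_z((-1)^a \theta) = R_z^a(-2\theta)\,R_z(\theta)$, using the fact that all $R_z$ rotations mutually commute (so the ordering is free). This gives the intermediate form
\begin{equation}
R_z(\theta) Z^b X^a \;=\; Z^b X^a R_z^a(-2\theta)\, R_z(\theta).
\end{equation}

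Then I would apply the pull-through rule once more, now to the factor $R_z^a(-2\theta)$: since $X^a R_z^a(-2\theta) = R_z^a(2\theta) X^a$ (trivial for $a=0$ and exactly the rule for $a=1$), and since $Z^b$ commutes with the diagonal $R_z^a(2\theta)$, I can migrate $R_z^a(2\theta)$ all the way to the left and collect $Z^b X^a$ in the middle. Rewriting $Z^b X^a$ as $X^a Z^b$ (picking up the anticommutation sign $(-1)^{ab}$) yields exactly the claimed right-hand side $R_z^a(2\theta) X^a Z^b R_z(\theta)$, with the $(-1)^{ab}$ appearing as a key-dependent global phase. This is consistent with the explicit phase $e^{\mp i a\pi/2}$ that appears in Eq.~(\ref{eq:rz_pi_2_2}) when specialized to $\theta=\pm\pi/2$, since there $R_z^a(2\theta)=R_z^a(\pm\pi)$ equals $Z^a$ only up to a phase of $e^{\mp ia\pi/2}$.

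The only genuine obstacle is the phase accounting: one must be careful that the $(-1)^{ab}$ from $ZX=-XZ$ is not inadvertently dropped on one side but kept on the other, and that the $a$-th power of $R_z(\pm 2\theta)$ is interpreted as the identity when $a=0$ and as the full rotation when $a=1$. Beyond that, the computation is a short, line-by-line manipulation; I would present it either by the uniform derivation sketched above or, for transparency, by splitting into the four cases $(a,b)\in\{0,1\}^2$ and checking each by direct substitution into $R_z(\theta)=\mathrm{diag}(e^{-i\theta/2},e^{i\theta/2})$.
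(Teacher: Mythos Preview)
Your argument is correct and takes a genuinely different route from the paper. The paper proceeds by brute-force matrix algebra: it writes $D=R_z(\theta)X^aZ^bR_z(-\theta)$, expresses $X^a$ and $Z^b$ as $2\times2$ matrices with indicator-variable entries $1_a,1_b$, multiplies everything out, and then recognizes the resulting matrix as the product $R_z^a(2\theta)X^aZ^b$. You instead work at the operator level, using only the two Pauli conjugation rules $ZR_z(\theta)=R_z(\theta)Z$ and $R_z(\theta)X=XR_z(-\theta)$, together with the additivity $R_z(\alpha)R_z(\beta)=R_z(\alpha+\beta)$. Your route is shorter and more conceptual, and it generalizes immediately to any diagonal rotation without ever writing a matrix; the paper's route has the virtue of being a mechanical check that leaves nothing to interpretation. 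You are also more careful than the paper about the global phase: your derivation makes the $(-1)^{ab}$ from swapping $Z^bX^a\leftrightarrow X^aZ^b$ explicit, whereas the paper's final ``Hence'' silently absorbs it (its matrix computation actually yields $R_z(\theta)X^aZ^b=R_z^a(2\theta)X^aZ^bR_z(\theta)$, which differs from the stated theorem by exactly that phase). Since key-dependent global phases are physically irrelevant for the protocol, both presentations are adequate, but your explicit accounting is cleaner.
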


\begin{proof}
For decryption of $R_z$ gate encrypted by $X$ and $Z$ gate, we need to find a unitary $D$ such that:
\begin{equation}
    R_z(\theta) = D \cdot R_z(\theta) Z^b X^a,
\end{equation}
where $a,b \in_r \{0,1\}$.
Here, solving for $D$ we get:
\begin{align}
    D &= R_z(\theta) X^a Z^b R_z(-\theta).
\end{align}

Note that $R_z^\dagger(\theta) = R_z(-\theta)$.
Also,
\begin{equation}
    R_z(\theta) = \begin{pmatrix}
        e^{-i\theta /2} & 0 \\
        0 & e^{i\theta/2}
    \end{pmatrix}.
\end{equation}

Given $X$ and $Z$ are standard Pauli rotation gates, we can represent $X^a$ and $Z^b$ as:
\begin{align}
    X^a = \begin{pmatrix}
        1-1_a & 1_a \\
        1_a & 1-1_a \\
    \end{pmatrix}, &
    Z^b = \begin{pmatrix}
        1 & 0 \\
        0 & (-1)^{1_b}
    \end{pmatrix},
\end{align}

where 
\begin{equation}
1_x = \begin{cases} 1, & \text{if } x =1, \\ 
0, & \text{if } x=0.
\end{cases}
\end{equation}

This indicator variable representation of $X$ and $Z$ gate allows us to algebraically manipulate the matrix form of $D$, which can be represented as:
\begin{equation}
    D = \begin{pmatrix}
        1-1_a & (-1)^{1_b} 1_a e^{-i\theta} \\
        1_a e^{i\theta} & (-1)^{1_b}(1-1_a)
    \end{pmatrix}.
\end{equation}
This matrix representation of $D$ can be further decomposed into simpler gate combinations as:
\begin{align}
   D &= \begin{pmatrix}
        e^{i\theta1_a} & 0 \\
        0 & e^{i\theta1_a}
    \end{pmatrix}
    \begin{pmatrix}
        1-1_a & 1_a \\
        1_a & 1-1_a \\
    \end{pmatrix}
        \begin{pmatrix}
        1 & 0 \\
        0 & (-1)^{1_b}
    \end{pmatrix}, \notag\\
    &= R^a_z(2\theta) X^a Z^b.
\end{align}

Here, we have used the following identities associated with indicator variables $1_a$ and $1_b$, where $a,b \in_r \{0,1\}$:
\begin{align}
    (-1)^{2\cdot 1_b}  &= 1, & (1-1_a)\cdot 1_a &= 0, \notag\\
    (1-1_a)^2 &= 1-1_a, & (1-1_a) + 1_ae^x &= e^{x \cdot 1_a}.
\end{align}

Hence,
\begin{equation}
    R_z(\theta)Z^bX^a = R_z^a(2\theta)X^aZ^bR_z(\theta).
\end{equation}
This shows that the decryption of the rotation gate $R_z(\theta)$ is dependent on the $R_z(2\theta)$ gate.
\end{proof}

\begin{theorem}\label{thrm2:integral_rz}
Any rotation gate $R_z(\pm\frac{\pi}{2^m})$ for $m \in \mathbb{Z^{+}}$ can be exactly decrypted using atmost $m$ rotation gates.
\end{theorem}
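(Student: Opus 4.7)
The plan is to argue by induction on $m$, using Theorem~\ref{thrm1:rz_gate} as the single recursive step. The key observation is that the identity $R_z(\theta)Z^bX^a = R_z^a(2\theta)X^aZ^bR_z(\theta)$ reduces the decryption of one $R_z(\theta)$ rotation to the application of $R_z(\theta)$ itself once, plus \emph{at most} one further rotation $R_z(2\theta)$ (which is triggered only when the key bit $a=1$) together with client-side Pauli corrections. When $\theta = \pm\pi/2^m$, the ``doubled angle'' $\pm\pi/2^{m-1}$ remains in the exact family, so the recursion stays closed and terminates at $\pm\pi/2$.

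First I would handle the base case $m=1$ directly via Eq.~(\ref{eq:rz_pi_2_2}): the gate $R_z(\pm\pi/2)$ commutes past $X^aZ^b$ at the cost of a client-implementable global phase and an extra $Z^a$, so a single server-side rotation suffices and the bound $m=1$ holds. For the inductive step, assume that $R_z(\pm\pi/2^{k})$ admits exact decryption with at most $k$ rotation gates for every Pauli encryption $X^aZ^b$. Apply Theorem~\ref{thrm1:rz_gate} with $\theta=\pm\pi/2^{k+1}$ to obtain
\begin{equation}
R_z(\pm\pi/2^{k+1})Z^bX^a = R_z^a(\pm\pi/2^{k})\,X^aZ^b\,R_z(\pm\pi/2^{k+1}).
\end{equation}
The right-hand side contains one server call to $R_z(\pm\pi/2^{k+1})$, followed by Pauli corrections that the client performs, and then a conditional server call to $R_z(\pm\pi/2^{k})$ which is invoked only when $a=1$. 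In the $a=0$ branch the correction collapses to the identity and the total count is $1$; in the $a=1$ branch we invoke the inductive hypothesis at level $k$, yielding a total of at most $1+k$ rotation gates. Either way the bound $k+1$ is met, closing the induction.

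The main subtlety, rather than the counting itself, will be matching this clean recursion to the pseudocode of Algorithm~\ref{algo1:dec_rz_integral}: the algorithm interleaves $Swap$ gates and an ancilla register to hide the \emph{run\_of\_one} pattern, and one must confirm that these client-side $\{X,Z,Swap\}$ operations do not count against the rotation-gate budget and do not alter the algebraic identity being unrolled. I would therefore verify, by tracing one full recursion, that each $Swap$ merely relabels the working qubit while the sequence of server-side $R_z$ calls is exactly the sequence produced by the inductive argument above. Finally I would note that the bound $m$ is tight and is attained precisely when $a_0 = a_1 = \cdots = a_{m-1} = 1$; since each $a_i$ is uniform in $\{0,1\}$, this worst case occurs with probability $2^{-m}$, consistent with the $O(1)$ expected cost remarked on in the main text.
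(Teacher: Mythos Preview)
Your proposal is correct and follows essentially the same approach as the paper: both arguments use Theorem~\ref{thrm1:rz_gate} as the recursive step and Eq.~(\ref{eq:rz_pi_2_2}) as the base case $m=1$, with the paper simply stating the recursion informally while you phrase it as an explicit induction with a case split on $a$. Your additional remarks on the \textit{Swap}/ancilla bookkeeping and on tightness go beyond what the paper's proof records, but they are accurate and do not conflict with it.
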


\begin{proof}

We start by showing that the decryption of the $R_z(\pm \pi/2)$ gate is trivial and can be performed using the client's Pauli rotations only. Using theorem \ref{thrm1:rz_gate}, we can verify this fact as:
\begin{align}
    R_z(\pm\pi/2)X^aZ^b &= R^a_z(\pm\pi) X^a Z^b R_z(\pm\pi/2), \notag\\
                    &= e^{\mp ia\pi/2}Z^a X^a Z^b R_z(\pm \pi/2).
\label{eq:rz_pi_2}
\end{align}

Using Eq. (\ref{eq:rz_pi_2}), we can recursively decrypt any $R_z(\pm\pi/2^{k})$ with sucessive application of $R_z(\pm\pi/2^{k-1})$ gates untill the $k=1$ i.e. $\pm\pi/2^{k} = \pm\pi/2$.
This shows that the recursive decryption of $R_z(\pm\pi/2^{m})$ where $m \in \mathbb{Z^+}$ requires at most $m$ applications of the $R_z$ gate and at most $m$ rounds of communication between client and server.

\end{proof}

\begin{theorem}\label{thrm3:non_integral_rz}
Any $R_z$ rotation gate with given $\theta$ can be approximately decrypted with arbitrary precision $\epsilon$ using at most $log_2^2(\pi/\epsilon)$ rotation gates.
\end{theorem}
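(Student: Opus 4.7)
The plan is to exploit the dyadic expansion used in Eq. (\ref{eq:approx_theta}) and reduce the analysis to a repeated application of Theorem \ref{thrm2:integral_rz}, then control both the approximation error in $\theta$ and the accumulated gate count.

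First, I would make the expansion in Eq. (\ref{eq:approx_theta}) precise. Given any $\theta \in \mathbb{R}$, choose $p \in \mathbb{Z}$ such that $r := \theta - p\pi$ lies in $[-\pi/2,\pi/2)$. Then $r/\pi \in [-1/2,1/2)$ admits a (signed) binary expansion $r/\pi = \sum_{m=1}^{\infty} a_m 2^{-m}$ with $a_m \in \{-1,0,1\}$. Truncating at $M := \lceil \log_2(\pi/\epsilon)\rceil$ gives
\begin{equation}
\Big|\theta - p\pi - \sum_{m=1}^{M} \frac{a_m \pi}{2^m}\Big| \;\le\; \sum_{m=M+1}^{\infty} \frac{\pi}{2^m} \;=\; \frac{\pi}{2^M} \;\le\; \epsilon.
\end{equation}

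Next, I would translate this angle error into an operator-norm error on $R_z$. Writing $R_z(\theta) = \mathrm{diag}(e^{-i\theta/2}, e^{i\theta/2})$, a direct computation gives $\|R_z(\theta) - R_z(\theta')\| \le |\theta - \theta'|/2 \cdot \text{(bounded constant)}$, hence the unitary implemented via the truncated decomposition of Eq. (\ref{eq:rz_decomposed_na}) is within $O(\epsilon)$ of the ideal $R_z(\theta)$ in operator norm, which is the sense of approximate decryption claimed.

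Now for the gate count. The $R_z(p\pi)$ factor uses only client-side Pauli's by Eq. (\ref{eq:rz_mpi}) and contributes zero server rotation gates. For each $m$ with $a_m \neq 0$, the term $R_z(a_m \pi/2^m)$ is decrypted by invoking Algorithm \ref{algo1:dec_rz_integral} which, by Theorem \ref{thrm2:integral_rz}, consumes at most $m$ server-side rotation gates. Summing over the at most $M$ active terms,
\begin{equation}
\sum_{m=1}^{M} m \;=\; \frac{M(M+1)}{2} \;\le\; M^2 \;=\; \lceil\log_2(\pi/\epsilon)\rceil^2,
\end{equation}
which gives the claimed $O(\log_2^2(\pi/\epsilon))$ bound on rotation gates (and hence communication rounds, since each server rotation corresponds to one round in the protocol).

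The routine pieces are the angle-to-operator-norm bound and the arithmetic summation. The subtle step I would be most careful about is the existence and uniqueness (up to tail choice) of the signed binary expansion with $a_m \in \{-1,0,1\}$, together with justifying that choosing the reduced representative $r \in [-\pi/2,\pi/2)$ does not spoil the bound on $p$ or the subsequent decryption of $R_z(p\pi)$; once these are pinned down, the rest of the argument is just Theorem \ref{thrm2:integral_rz} applied $M$ times and the triangle inequality for the approximation error.
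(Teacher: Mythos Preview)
Your proposal is correct and follows essentially the same route as the paper's own proof: reduce $\theta$ modulo $\pi$, take a signed dyadic expansion truncated at $M=\lceil\log_2(\pi/\epsilon)\rceil$ to hit the target precision, then invoke Theorem~\ref{thrm2:integral_rz} on each surviving term and sum the per-term costs to obtain the $M^2$ bound. Your version is in fact slightly more explicit than the paper's (you write out $\sum_{m=1}^M m = M(M+1)/2$ and add the angle-to-operator-norm translation, both of which the paper leaves implicit), but the argument is the same.
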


\begin{proof}

We start by showing that any $\theta$ can be approximated by a finite series $S$ with arbitrary precision $\epsilon$ using apprropriate number of elements ($M+1$):
\begin{align}
   \mathcal{S}= \bigg\{ p\pi + \sum_{m=1}^M \frac{a_m\pi}{2^m} \bigg|  a_m\in \{-1, 0,1\}, p \in \mathbb{Z} \bigg\},
\end{align}

Given an arbitrary $\theta$, an integral multiple of $\pi$ can be directly reached using an appropriate value of $p= \theta \text{ (mod }\pi)$.
Now, $\theta_1= \theta-p\pi < \pi$.

Using a modified series, similar to dyadic expression $\sum_{m=1}^\infty \frac{a_m}{2^m} \in [0,1], \forall a_m\in\{0,1\}$, we can say that $\sum_{m=1}^\infty \frac{a_m\pi}{2^m} \in [-\pi,\pi], \forall a_m \in \{-1,0,1\}$.
However, for finite approximation of $\theta_1$, we need to find  $\theta' \in \mathcal{S}$ such that:
\begin{align}
    |\theta_1 - \theta'| < \epsilon.
\end{align}
The precision of the series will be the smallest value achievable in the set $\mathcal{S}$ of $M+1$ elements, i.e., $\pi/2^M$, which implies $|\theta_1-\theta'|=\pi/2^M$.
Hence, the appropriate value of $M$ for approximation with arbitrary precision $\epsilon$ will be:
\begin{align}
    M > log_2(\pi/\epsilon).
\end{align}

Using this limit on the number of elements for finite representation of $\theta$, we now find the upper limit on decryption of arbitrary $R_z(\theta)$ using the following property,
\begin{align}
    R_z(x + y) = R_z(x)R_z(y).
\end{align}

Any arbitrary $R_z(\theta)$ can be perfomed using series $S$ using appropriate values of $p \in\mathbb{Z}, a_m \in \{-1,0,1\}$ for $ m \in \{1,\cdots,M\}$ with $M=\lceil log_2(\pi/\epsilon)\rceil$ as:
\begin{align}
    R_z(\theta) &\approx R_z(p\pi + \sum_{m=1}^M a_m\pi/2^m), \notag \\
     &\approx R_z(p\pi) \prod_{m=1}^M R_z(a_m \pi/2^m).
     \label{eq:rz_decomposed}
\end{align}

Here, $R_z(p\pi)$ can be implemented using at most one $Z$ gate as evident from Eq. (\ref{eq:rz_mpi}), while $R_z(a_m\pi/2^m)$ gates with $m\in \mathbb{Z^{+}}$ requires $m$ rotation gates each as proven in Theorem \ref{thrm2:integral_rz}.
This shows that any arbitrary rotation gate $R_z(\theta)$ can be implemented using at most $M$ integral $R_z(\pm\pi/2^m)$ gates (Eq. (\ref{eq:rz_decomposed})), which in turn requires $m$ rotation gates each for decryption (Theorem \ref{thrm2:integral_rz}).

Hence, the total rotation gates required for recursive decryption of $R_z(\theta)$ is at most $M^2=log_2^2(\pi/\epsilon)$, which is also equal to the number of communication rounds needed between client and server to perform the secure computation.

\end{proof}

\begin{corollary}\label{coro1:efficient_rz}
The decyrption based on an arbitrary $R_z$ rotation is efficient than using a non-parametric gate set whose decomposition is based on the Solovay-Kitaev theorem.
\end{corollary}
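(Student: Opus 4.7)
The plan is to read the corollary as a statement about asymptotic gate/round counts and dispatch it by stacking one previously proved upper bound against the classical complexity of Solovay-Kitaev decomposition. Theorem \ref{thrm3:non_integral_rz} already supplies the $O(\log_2^2(\pi/\epsilon))$ upper bound for the proposed recursive decryption, so the only new ingredient needed is a matching accounting on the non-parametric side.

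First, I would invoke the Solovay-Kitaev theorem in its standard form: any target single-qubit unitary, in particular $R_z(\theta)$, can be compiled to precision $\epsilon$ from a fixed Clifford$+T$-style universal set using $O(\ln^{c}(1/\epsilon))$ gates, with the best known exponent $c \approx 3.97$. Next I would note that in the HBQC setting each non-parametric gate produced by the decomposition must be separately encrypted, transmitted, executed on the server, and decrypted, with $T$ in particular demanding the ancilla-assisted protocol of Eq.~(\ref{eq:t_gate_dec}); hence the classical gate count of the Solovay-Kitaev output is a lower bound (up to a multiplicative constant) on the number of client-server communication rounds for this approach. Finally, I would place the two resource estimates side by side and verify the elementary asymptotic inequality
\begin{equation}
\log_2^2(\pi/\epsilon) \;=\; o\!\bigl(\ln^{3.97}(1/\epsilon)\bigr) \qquad (\epsilon \to 0^{+}),
\end{equation}
by absorbing the $\log_2$-to-$\ln$ base change and the additive shift $\pi/\epsilon \mapsto 1/\epsilon$ into a constant and observing that the ratio decays like $(\ln(1/\epsilon))^{-1.97}$.

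The main obstacle is not any particular calculation but pinning down what \textbf{efficient} should mean so that the corollary is a genuine comparison rather than a restatement of Theorem \ref{thrm3:non_integral_rz}. I would fix the figure of merit to be the number of client-server communication rounds (equivalently the number of quantum gates the server must execute) per logical $R_z(\theta)$ delegated to precision $\epsilon$. A secondary subtlety I would address explicitly is that Solovay-Kitaev compilation returns a sequence whose per-gate HBQC overheads (ancilla use for $T$, $S$-corrections, etc.) affect only the multiplicative constant and not the exponent, so the $O(\ln^{3.97}(1/\epsilon))$ bound transfers unchanged from the classical compilation result to the HBQC resource count and the displayed $o$-bound closes the argument.
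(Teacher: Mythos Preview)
Your proposal is correct and follows essentially the same route as the paper: invoke Theorem~\ref{thrm3:non_integral_rz} for the $O(\log_2^2(\pi/\epsilon))$ bound, invoke Solovay--Kitaev with exponent $c\approx 3.97$ for the non-parametric side, and compare the two asymptotics. If anything you are more careful than the paper, which simply asserts $O(\log_2^2(\pi/\epsilon)) < O(\ln^{3.97}(1/\epsilon))$ without the explicit $o$-bound verification or the discussion of why per-gate HBQC overheads only affect constants.
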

\begin{proof}
According to the Solovay-Kitaev theorem, the transformation of arbitrary $R_z(\theta)$ rotation gates to the $H, S, T$ sequence needs a length of  $O(\ln^c(1/\epsilon))$ for $\epsilon$ precision. This requires any blind quantum technique based on such non-parametric gates to consume communication rounds of size $O(\ln^c(1/\epsilon))$, where $c=3.97$ \cite{dawson2006solovay}.

However, Theorem \ref{thrm3:non_integral_rz} shows that an arbitrary rotation gate-based blind approach has an upper bound on communication rounds as $O(\log_2^2(\pi/\epsilon))$.

Hence, showing that a parametric resource set based on $R_z(\theta)$ is more efficient than a non-parametric resource set based on $\{H, S, T\}$ gates for given $\epsilon$ precision as:
\begin{align}
    O(\log_2^2(\pi/\epsilon)) < O(\ln^{3.97}(1/\epsilon)),
\end{align}
\end{proof}

\bibliography{main_ref}


\end{document}